\DeclareMathOperator{\Tr}{Tr}
\DeclareMathOperator{\negl}{negl}
\newcommand{\eps}{\varepsilon}
\newcommand{\Trr}[2][]{\Tr\ifthenelse{\isempty{#1}}{}{_{#1}}\left[#2\right]}
\newcommand{\Prr}[2][]{\Pr\ifthenelse{\isempty{#1}}{}{_{#1}}\left[#2\right]}
\newcommand{\axx}[1]{\expandafter\newcommand\csname#1\endcsname{\axname{#1}\xspace}}
\newcommand{\saxx}[2]{\expandafter\newcommand\csname#1\endcsname{\axname{#2}\xspace}}
\newcommand{\SignL}[3]{\Sign(\sk_{#1}, \pk_{#1}, \pk_{#2}, #3)}
\newcommand{\SimulateL}[3]{\Simulate(\sk_{#2}, \pk_{#2}, \pk_{#1}, #3)}
\newcommandtwoopt{\Sign}[2][][]{\axname{Sign}{\ifthenelse{\isempty{#1}}{}{_{#1\rightarrow #2}}}\xspace}
\newcommandtwoopt{\Verify}[2][][]{\axname{Verify}{\ifthenelse{\isempty{#1}}{}{_{#1\rightarrow #2}}}\xspace}
\newcommandtwoopt{\Simulate}[2][][]{\axname{Simulate}{\ifthenelse{\isempty{#1}}{}{_{#1\rightarrow #2}}}\xspace}
\newcommand{\sk}{\ensuremath{\mathsf{sk}}\xspace}
\newcommand{\pk}{\ensuremath{\mathsf{pk}}\xspace}
\newcommand{\KeyGen}{\axname{KeyGen}}
\newcommand{\MA}{\mathcal{A}}
\newcommand{\MB}{\mathcal{B}}
\newcommand{\CC}{\mathcal{C}}
\newcommand{\MP}{\mathcal{P}}
\newcommand{\MO}{\mathcal{O}}
\newcommand{\ch}[2][]{{\color{gray}\ifthenelse{\isempty{#1}}{#2}{#1\rightarrow #2}}}
\newcommand{\N}{\mathbb{N}}
\newcommand{\axname}[1]{\ensuremath{\mathsf{#1}}}
\definecolor{dblue}{rgb}{0,0,.8}
\definecolor{dred}{rgb}{.7,.05,.05}
\definecolor{dgreen}{rgb}{.1,.5,.1}
\title{On SDVS Sender Privacy\\ In The Multi-Party Setting}
\author{Jeroen van Wier}
\institute{Interdisciplinary Centre for Security, Reliability and Trust, University of Luxembourg}
\begin{document}

\maketitle

\begin{abstract}
    Strong designated verifier signature schemes rely on sender-privacy to hide the identity of the creator of a signature to all but the intended recipient. This property can be invaluable in, for example, the context of deniability, where the identity of a party should not be deducible from the communication sent during a protocol execution. In this work, we explore the technical definition of sender-privacy and extend it from a 2-party setting to an n-party setting. Afterwards, we show in which cases this extension provides a stronger security and in which cases it does not.
\end{abstract}

\section{Introduction}

Digital signatures have many useful applications in our everyday lives, from message authentication to software updates. In many cases, they provide a publicly verifiable way of proving the authenticity of a message. However, sometimes it is desired to prove authenticity only to the intended receiver, or designated verifier, of a message. Designated verifier signature (\DVS) schemes were constructed for this reason, to allow for the signing of a message in such a way that the receiver would be fully convinced of its authenticity, but to third-party observers, the validity of the signature could be denied. Strong designated verifier signature (\SDVS) schemes are the refinement of this idea, with the additional restraint that no-one but the creator and the designated verifier should be able to deduce from a signature who was the creator. While this concept has been studied extensively and is interpreted intuitively in the same way by many, the technical definitions for the property separating \DVS schemes from \SDVS schemes, known as sender-privacy, vary. In this work we analyze and generalize the definitions in current literature and aim to provide a universally applicable way to define this property, particularly focusing on the $n$-party setting. Furthermore, we prove that our general form of sender-privacy can be achieved by combining weaker forms of sender-privacy with non-transferability or unforgeability. 

\subsection{Related work}

Chaum and van Antwerpen first introduced undeniable signatures in \cite{chaum1989undeniable}, which required interaction between the signer and verifier. In 1996 this requirement was removed by Chaum  \cite{chaum1996private} and by Jakobsson et al. \cite{jakobsson1996designated} separately, who introduced designated verifier signatures. These formal definitions were later refined by Saeednia et al. \cite{saeednia2003efficient}. Rivest et al. introduced ring signatures in \cite{rivest2001leak}, which can be interpreted as \DVS when a ring size of 2 is used, although not \SDVS.

An important step was made when Laguillaumie and Vergnaud formalised \emph{sender-privacy}, the property separating \DVS from \SDVS, in \cite{laguillaumie2004designated}. The notion of \SDVS was further refined to Identity-Based \SDVS by Susilo et al. \cite{susilo2004identity}, where all private keys are issued using a master secret key (i.e. central authority). For this setting, sender-privacy was later formalized in a game-based manner by Huang et al. \cite{huang2006short}.

\section{Preliminaries}

We denote with $\kappa \in \N$ the security parameter of a scheme and implicitly assume that any algorithm that is part of a scheme is given input $1^\kappa$, i.e. the string of $\kappa$ $1$'s, in addition to its specified inputs. We implicitly assume that all adversaries are probabilistic polynomial-time Turing machines (\PPT), although the results also hold if all adversaries are probabilistic polynomial-time quantum Turing machines (\QPT). We write $[n]$ for the set $\{0, \dots, n\}$. We call a function $\eps(n)$ negligible (denoted $\eps \leq \negl(n)$) if for every polynomial $p$ there exists $n_0 \in \N$ such that for all $n \geq n_0$ it holds that $\eps(n) < \frac{1}{p(n)}$. We reserve $\bot$ as an error symbol.

\begin{definition}
    A \emph{designated verifier signature scheme (\DVS scheme)} is a tuple $(\Setup, \KeyGen, \Sign, \Verify, \Simulate)$ of \PPT algorithms such that:
    \begin{itemize}
        \item $\Setup$: Produces the public parameters of a scheme, $\params$. It is implicitly assumed that these parameters are passed to the following algorithms.
        \item $\KeyGen$: Produces a keypair $(\pk, \sk)$.
        \item $\Sign[S][V](m) := \Sign(\sk_S, \pk_S, \pk_V, m)$: Upon input of a sender's keypair, a verifier's public key, and a message $m$, produces a signature $\sigma$ if all keys are valid and $\bot$ otherwise.
        \item $\Verify[S][V](m, \sigma) := \Verify(\sk_V, \pk_V, \pk_S, m, \sigma)$: Upon input of a verifier's keypair, a sender's public key, a message $m$, and a signature $\sigma$, outputs the validity of $\sigma$ (a boolean value) if all keys are valid and $\bot$ otherwise.
        \item $\Simulate[S][V](m) := \Simulate(\sk_V, \pk_V, \pk_S, m)$: Upon input of a verifier's keypair, a sender's public key, and a message $m$, produces a simulated signature $\sigma'$.
    \end{itemize}
\end{definition}

\subsection{Current definitions}
The original definitions for strong verifier designation are a combination of what we currently distinguish as \emph{non-transferability} and \emph{sender privacy}. The following definitions are the initial attempts at defining strong verifier designation, and in their respective papers, they are accompanied by definitions for (non-strong) verifier designation, which are very much in line with the intuition behind non-transferability.

\begin{definition}[\cite{jakobsson1996designated}]
    Let $(\MP_A, \MP_B)$ be a protocol for Alice to prove the truth of the statement $\Omega$ to Bob. We say that Bob is a \emph{\JSI strong designated verifier} if, for any protocol $(\MP_A, \MP_B, \MP_C, \MP_D)$ involving Alice, Bob, Cindy, and Dave, by which Dave proves the truth of some statement $\theta$ to Cindy, there is another protocol $(\MP_C, \MP'_D)$ such that Dave can perform the calculations of $\MP'_D$, and Cindy cannot distinguish transcripts of $(\MP_A, \MP_B, \MP_C, \MP_D)$ from those of $(\MP_C, \MP'_D)$.
\end{definition}

In the above definition, the intuition is that Alice proves a statement to Bob, e.g. the authenticity of a given message. Dave observes this interaction and tries to prove this observation to Cindy. However, strong designation in this sense prevents him from doing so, as any proof he could present to Cindy is indistinguishable (to Cindy) from a simulated proof.

\begin{definition}[\cite{saeednia2003efficient}]
    Let $\MP(A,B)$ be a protocol for Alice to prove the truth of the statement $\Omega$ to Bob. We say that $\MP(A,B)$ is a \emph{\SKM strong designated verifier proof} if anyone can produce identically distributed transcripts that are indistinguishable from those of $\MP(A,B)$ for everybody, except Bob.
\end{definition}

In later work, we see the definition for strong verifier designation split. Non-transferability captures the notion that the verifier can produce signatures from anyone designated to himself, thus ensuring that no signature provides proof of signer-verifier interaction for third parties. Sender privacy adds to this that, from a signature, one cannot deduce the sender, thus allowing no third-party observer to use a signature to plausibly deduce that interaction between two parties happened.
\begin{definition}
    A \DVS scheme $\Pi = (\KeyGen, \Sign, \Verify, \Simulate)$ is \emph{computationally non-transferable} if for any adversary $\MA$,
    \[ \Adv^\NT_{\Pi,\MA}(\kappa) = \Prr[b \in \{0,1\}]{\Gm^\NT_{\Pi,\MA}(\kappa,b) = b} - \frac{1}{2} \leq \negl(\kappa),\]
    where the game $\Gm^\NT_{\Pi, \MA}$ is defined as follows:
    \begin{algorithm}[ht]
    	\caption{$\Gm^\NT_{\Pi, \MA}(\kappa, b)$}
        \DontPrintSemicolon
        $\params \leftarrow \Setup$\;
        $(\pk_S, \sk_S) \leftarrow \KeyGen$\;
        $(\pk_V, \sk_V) \leftarrow \KeyGen$\;
        $(m^*, \state) \leftarrow \MA(1,\params, \pk_S, \sk_S, \pk_V, \sk_V)$\;
        \If{$b=0$}{
            $\sigma^* = \SignL{S}{V}{m^*}$\;
        }
        \Else{
            $\sigma^* = \SimulateL{S}{V}{m^*}$\;
        }
        $b' \leftarrow \MA(2,\state,\sigma^*)$\;
        Output $b'$
    \end{algorithm}
\end{definition}

\begin{definition}
    A \DVS $\Pi = (\KeyGen, \Sign, \Verify, \Simulate)$ is \emph{statistically non-transferable} if for all $S$, $V$, and $m$, $\Sign[S][V](m)$ and $\Simulate[S][V](m)$ are statistically indistinguishable distributions.
\end{definition}

For sender-privacy, many slightly different definitions are presented in literature. Many follow the form of Game \ref{exp:sp}, but with different oracles presented to the adversary. Note that this game is a generalized definition designed to be instantiated with a set of oracles $\MO$ to form the specific definitions found in literature. Besides the oracles, the game takes as parameters the security parameter $\kappa$, the number of parties $n$, and the challenge party index $c$. For each $i\in[n]$, party $i$ is denoted $P_i$. $P_n$ is designated as the verifier for the challenge. In much of the literature this game is played with 3 parties: $S_0$, $S_1$, and $V$, who would here correspond with $P_0$, $P_1$, and $P_2$ respectively in the $n=2$ setting.

\begin{algorithm}[ht]
    \label{exp:sp}
	\caption{$\Gm^\PSI_{\Pi, \MA, \MO}(\kappa, n, c)$, the generalized game for sender-privacy.}
    \DontPrintSemicolon
    $\params \leftarrow \Setup$\;
    $(\pk_{P_0}, \sk_{P_0}) \leftarrow \KeyGen; \dots; (\pk_{P_n}, \sk_{P_n}) \leftarrow \KeyGen$\;
    $(m^*, \state) \leftarrow \MA^{\MO^{(1)}_{sign},\MO^{(1)}_{veri},\MO^{(1)}_{sim}}(1,\params, \pk_{P_0}, \dots, \pk_{P_n})$\;
    $\sigma^* = \Sign[P_c][P_n](m^*)$\;
    $c' \leftarrow \MA^{\MO^{(2)}_{sign},\MO^{(2)}_{veri},\MO^{(2)}_{sim}}(2,\state,\sigma^*)$\;
    Output $c'$
\end{algorithm}


\begin{definition}[\cite{huang2006short}]
    A \DVS $\Pi = (\KeyGen, \Sign, \Verify, \Simulate)$ is a \emph{\Hua-strong} \DVS if it is statistically non-transferable and for any $\PPT$ adversary $\MA$,
    \[ \Adv^{\PSI}_{\Pi,\MA}(\kappa) = \Prr[c \leftarrow \{0,1\}]{\Gm^\PSI_{\Pi,\MA}(\kappa, 2, c) = c} - \frac{1}{2} \leq \negl(\kappa),\]
    where $\Gm^\PSI_{\Pi,\MA}$ is played with the following oracles:
    \begin{itemize}
        \item $\MO^{(1)}_{sign}$: Upon input $(m_i, d_i)$ returns $\Sign[P_{d_i}][P_2](m_i)$ if $d_i \in \{0, 1\}$ and $\bot$ otherwise.
        \item $\MO^{(2)}_{sign}$: Upon input $(m_i, d_i)$ returns $\Sign[P_{d_i}][P_2](m_i)$ if $d_i \in \{0, 1\}$ and $m_i \neq m^*$,and $\bot$ otherwise.
        \item $\MO^{(1)}_{veri}$: Upon input $(\sigma_i, m_i, d_i)$ returns $\Verify[P_{d_i}][P_2](m_i)$ if $d_i \in \{0, 1\}$ and $\bot$ otherwise.
        \item $\MO^{(2)}_{veri}$: Upon input $(\sigma_i, m_i, d_i)$ returns $\Verify[P_{d_i}][P_2](m_i)$ if $d_i \in \{0, 1\}$, $\sigma_i \neq \sigma^*$, and $m_i \neq m^*$,and $\bot$ otherwise.
        \item $\MO^{(1)}_{sim} = \MO^{(2)}_{sim} = \emptyset$
    \end{itemize}
\end{definition}

In \cite{huang2006short}, Huang et al. define signer-privacy for identity-based-\SDVS, a similar type of \DVS where all keypairs are issued by a central authority. Here, they allow signing queries from any party to any party, and the adversary is allowed to choose the two signer and the verifier parties. We explore this option for \SDVS in Definition \ref{def:advpsi}.

\section{Bringing sender-privacy to the multi-party setting}
\label{sec:multipartySP}

Sender privacy is meant to provide security in the setting where an eavesdropping adversary is trying to detect the identity of the sender of a signature. In the previously presented definitions, this is modeled by a coin flip between two senders, with a fixed verifier. This way of defining sender privacy is similar to key-privacy in public-key cryptography \cite{bellare2001key}. The key difference here is that public-key ciphertexts are only related to one keypair, the receiver's. However, designated verifier signatures are bound to two parties, the signer and the designated verifier. This creates the problem that the naive way of defining sender-privacy does not cover any attacks that require multiple parties. In key-privacy, any adversary requiring $n$ parties for their attack can perform this attack in the two-party setting by simulating the other $n-2$ parties themself. However, in the case of \SDVS schemes, this is not necessarily possible. The adversary could be unable to create signatures signed by one of the two challenge parties with their simulated parties as the verifier, as is depicted in Figure \ref{fig:npartyadv}. In particular if one does not have statistical non-transferability, this might pose a problem. For this reason, we explicitly shape our definition for the multi-party setting. We explore settings where this is a non-issue in Section \ref{sec:altsettings}.


\tikzstyle{Party}=[fill=white, draw=black, shape=circle, inner sep=1pt]

\tikzstyle{Challenger area}=[-, fill=none, draw=black, dashed=true]
\tikzstyle{Adversary area}=[-, fill=none, draw=red, dashed=true]
\tikzstyle{Sign}=[arrows={->[scale=2]}, fill=none, draw=black]
\tikzstyle{NoSign}=[arrows={->[scale=2,red]}, fill=none, draw=red]

\begin{figure}[ht]
    \centering
        \begin{tikzpicture}
        		\node [style=Party] (0) at (-3.25, 14.5) {$P_4$};
        		\node [style=Party] (1) at (-1.25, 14.5) {$P_5$};
        		\node [style=Party] (2) at (-3.25, 13) {$P_2$};
        		\node [style=Party] (3) at (-1.25, 13) {$P_3$};
        		\node [style=Party] (4) at (-3.25, 11.5) {$P_0$};
        		\node [style=Party] (5) at (-1.25, 11.5) {$P_1$};
        		\node [label=220:$\CC$] (6) at (-3.75, 15) {};
        		\node (7) at (-3.75, 11) {};
        		\node (8) at (-0.75, 11) {};
        		\node (9) at (-0.75, 15) {};
        		\draw [style=Challenger area] (6.center) to (9.center);
        		\draw [style=Challenger area] (9.center) to (8.center);
        		\draw [style=Challenger area] (8.center) to (7.center);
        		\draw [style=Challenger area] (7.center) to (6.center);
        		\draw [style=Sign] (3.north) to (1.south);
        		\node (20) at (-2, 13.75) {$\MO_{sign}$};
        \end{tikzpicture}\qquad\qquad\qquad
        \begin{tikzpicture}
                \node [style=Party] (0) at (-3.25, 14.5) {$P_4$};
        		\node [style=Party] (1) at (-1.25, 14.5) {$P_5$};
        		\node [style=Party] (2) at (-3.25, 13) {$P_2$};
        		\node [style=Party] (3) at (-1.25, 13) {$P_3$};
        		\node [style=Party] (4) at (-3.25, 11.5) {$P_0$};
        		\node [style=Party] (5) at (-1.25, 11.5) {$P_1$};
        		\node [label=220:{\color{red}$\MA$}] (6) at (-3.75, 15) {};
        		\node (7) at (-3.75, 14) {};
        		\node (8) at (-0.75, 14) {};
        		\node (9) at (-0.75, 15) {};
        		\node [label=220:{$\CC$}] (10) at (-3.75, 13.5) {};
        		\node (11) at (-3.75, 11) {};
        		\node (12) at (-0.75, 11) {};
        		\node (13) at (-0.75, 13.5) {};
        		\draw [style=Adversary area] (6.center) to (9.center);
        		\draw [style=Adversary area] (9.center) to (8.center);
        		\draw [style=Adversary area] (8.center) to (7.center);
        		\draw [style=Adversary area] (7.center) to (6.center);
        		\draw [style=Challenger area] (10.center) to (13.center);
        		\draw [style=Challenger area] (13.center) to (12.center);
        		\draw [style=Challenger area] (12.center) to (11.center);
        		\draw [style=Challenger area] (11.center) to (10.center);
        		\draw [style=NoSign] (3.north) to (1.south);
        
        \end{tikzpicture}
    \caption{Left: a 6-party setting where the adversary requests a signature using an oracle, Right: a 4-party setting where the adversary simulates another 2 parties but is now unable to obtain the same signature as on the left.}
    \label{fig:npartyadv}
\end{figure}
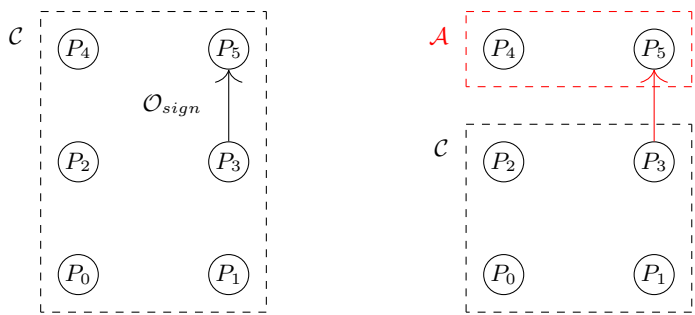

\subsection{Oracles}

Many different interpretations exist in the literature of what oracles the adversary should be given access to. The key choices here are whether (1) a simulation oracle should be provided, (2) a verification oracle should be provided, and (3) whether the adversary should still have access to the oracles after the challenge has been issued. Whereas the precise attacker model might depend on the context and our framework allows us to capture this, we here choose to focus on the strongest level of security, by providing the adversary with as much as possible without trivially breaking the challenge.

\begin{definition}
    For any $n$, let the \emph{standard $n$-sender \PSI-oracles} denote:
    \begin{itemize}
        \item $\MO^{(1)}_{sign} = \MO^{(2)}_{sign}$: Upon input $(m_i, s, v)$ returns $\sigma_i := \Sign[P_s][P_v](m_i)$ if $s, v \in [n]$ and $\bot$ otherwise.
        \item $\MO^{(1)}_{sim} = \MO^{(2)}_{sim}$: Upon input $(m_i, s, v)$ returns $\sigma_i := \Simulate[P_s][P_v](m_i)$ if $s, v \in [n]$ and $\bot$ otherwise.
        \item $\MO^{(1)}_{veri}$: Upon input $(m_i, \sigma_i, s, v)$ returns $\Verify[P_s][P_v](m_i, \sigma_i)$ if $s, v \in  [n]$ and $\bot$ otherwise.
        \item $\MO^{(2)}_{veri}$: Upon input $(m_i, \sigma_i, s, v)$ returns $\Verify[P_s][P_v](m_i, \sigma_i)$ if $s, v \in  [n]$ and $\sigma_i \neq \sigma^*$, and $\bot$ otherwise.
    \end{itemize}
\end{definition}

Note that the oracles make use of an implicit ordering of the parties. This makes no difference in any real-world application, but for constructing proofs we also define a set of oracles that allows this ordering to be hidden by a permutation.

\begin{definition}
    For any set of oracles 
    for $\Gm^\PSI$ and any permutation $\pi$ define the permuted oracles as follows, where $b\in\{0,1\}$:
    \begin{itemize}
        \item $\MO^{(\pi,b)}_{sign}$: On input $(m_i,s,v)$ output $\MO^{(b)}_{sign}(m_i, \pi(s), \pi(v))$
        \item $\MO^{(\pi,b)}_{sim}$: On input $(m_i,s,v)$ output $\MO^{(b)}_{sim}(m_i, \pi(s), \pi(v))$
        \item $\MO^{(\pi,b)}_{veri}$: On input $(m_i, \sigma_i,s,v)$ output $\MO^{(b)}_{veri}(m_i, \sigma_i, \pi(s), \pi(v))$
    \end{itemize}
\end{definition}

\subsection{Definition}

Taking all these things into consideration, we can now craft a definition of sender privacy. This definition is more in line with current research in ID-based-\SDVS research such as \cite{huang2011identity}.

\begin{definition}\label{def:psi}
    A \DVS scheme $\Pi$ is \emph{$n$-party sender private with respect to $\MO$} if for any adversary $\MA$,
    \[ \Adv^{\PSI}_{\Pi,\MA,\MO}(\kappa, n) = \Prr[c \leftarrow \{0,1\}]{\Gm^\PSI_{\Pi,\MA}(\kappa, n, c) = c} - \frac{1}{2} \leq \negl(\kappa).\]
    A \DVS scheme is \emph{$n$-party sender private} if it is $n$-party sender private with respect to the standard $n$-sender \PSI-oracles.
\end{definition}

\section{Alternative definitions}

In this section, we look at possible alternative definitions that one could consider equally valid generalizations of the 2-party setting to the $n$-party setting. For example, in the 2-party setting, we pick the challenge uniformly at random between the two possible senders, thus one could consider picking uniformly at random from $n$ senders in the $n$-party setting.


\begin{definition}
    A \DVS scheme is \emph{$n$-party random-challenge sender private with respect to $\MO$} if for any adversary $\MA$,
    \[ \Adv^{nr\PSI}_{\Pi,\MA, \MO}(\kappa, n) = \Prr[c \leftarrow {[n-1]}]{\Gm^\PSI_{\Pi,\MA,\MO}(\kappa, n, c) = c} - \frac{1}{n} \leq \negl(\kappa).\]
    A \DVS scheme is \emph{$n$-party random-challenge sender private} if it is $n$-party random-challenge sender private with respect to the standard n-sender \PSI-oracles.
\end{definition}


Furthermore, one could strengthen the definition even more by allowing the adversary to choose which two senders the challenge is chosen from and which party is the verifier. 

\begin{algorithm}[ht]
    \label{exp:advsp}
	\caption{$\Gm^\ADVPSI_{\Pi, \MA, \MO}(\kappa, n, c)$}
    \DontPrintSemicolon
    $\params \leftarrow \Setup$\;
    $(\pk_{P_0}, \sk_{P_0}) \leftarrow \KeyGen; \dots; (\pk_{P_n}, \sk_{P_n}) \leftarrow \KeyGen$\;
    $(m^*, s_0, s_1, r, \state) \leftarrow \MA^{\MO^{(1)}_{sign},\MO^{(1)}_{veri},\MO^{(1)}_{sim}}(1,\params, \pk_{P_0}, \dots, \pk_{P_n})$\;
    $\sigma^* = \Sign[P_{s_c}][P_r](m^*)$\;
    $c' \leftarrow \MA^{\MO^{(2)}_{sign},\MO^{(2)}_{veri},\MO^{(2)}_{sim}}(2,\state,\sigma^*)$\;
    Output $c'$
\end{algorithm}

\begin{definition}
    \label{def:advpsi}
    A \DVS scheme is \emph{$n$-party adversarial-challenge sender private with respect to $\MO$} if for any adversary $\MA$,
    \[ \Adv^\ADVPSI_{\Pi,\MA, \MO}(\kappa, n) = \Prr[c \leftarrow \{0,1\}]{\Gm^\ADVPSI_{\Pi,\MA}(\kappa, n, c) = c} - \frac{1}{2} \leq \negl(\kappa).\]
    A \DVS scheme is \emph{$n$-party adversarial-challenge sender private} if it is $n$-party adversarial-challenge sender private with respect to the standard $n$-sender \PSI-oracles.
\end{definition}

\subsection{Relations}

As one might expect, the above-defined alternative definitions relate strongly to the main definition, Definition \ref{def:psi}. In fact, in this section, we show that they are equivalent up to polynomial differences in the advantages.

For the universally random challenge, this can be done by simply only considering the cases where the challenge is $P_0$ or $P_1$, which will be the case 2 out of $n$ times, giving us a loss in the advantage of a factor $\frac{2}{n}$.

\begin{theorem}\label{thm:nr2nf}
    For any adversary $\MA$, \DVS scheme $\Pi$, and set of oracles $\MO$, 
    \[ \frac{2}{n} \cdot \Adv^{\PSI}_{\Pi,\MA,\MO}(\kappa, n) \leq \Adv^{nr\PSI}_{\Pi,\MA,\MO}(\kappa, n)\]
\end{theorem}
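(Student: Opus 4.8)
The plan is to use the fact that $\Adv^{\PSI}_{\Pi,\MA,\MO}(\kappa,n)$ and $\Adv^{nr\PSI}_{\Pi,\MA,\MO}(\kappa,n)$ are defined from the \emph{same} experiment $\Gm^\PSI$ run with the \emph{same} adversary $\MA$; the only differences are the distribution from which the challenge index $c$ is drawn (uniform on $\{0,1\}$ for \PSI, uniform on $[n-1]$ for $nr\PSI$) and the baseline that is subtracted ($\tfrac12$ versus $\tfrac1n$). Hence the whole comparison should reduce to elementary algebra once I condition on the value of $c$.

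First I would set, for each $i \in [n-1]$, the quantity $q_i := \Prr{\Gm^\PSI_{\Pi,\MA,\MO}(\kappa,n,i) = i}$, i.e.\ the probability that the game returns $i$ when it is run with the challenge fixed to $i$. The point to argue is that $q_i$ is literally the same number in both experiments: with the challenge value fixed, the parameters, all keypairs, the message $m^*$, and the challenge signature $\sigma^* = \Sign[P_i][P_n](m^*)$ are produced by identical steps, and $\MA$'s output depends only on its view, never on how $c$ was sampled. Conditioning on $c$ therefore decouples $\MA$'s behaviour from the choice of challenge distribution, so the two games share one and the same family $\{q_i\}_{i \in [n-1]}$.

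Next I would average $q_i$ over the two challenge distributions to obtain the success probabilities in closed form, namely $\tfrac12(q_0+q_1)$ for \PSI and $\tfrac1n\sum_{i=0}^{n-1} q_i$ for $nr\PSI$. Subtracting the respective baselines gives $\Adv^{\PSI}_{\Pi,\MA,\MO}(\kappa,n) = \tfrac12(q_0+q_1) - \tfrac12$ and $\Adv^{nr\PSI}_{\Pi,\MA,\MO}(\kappa,n) = \tfrac1n\sum_{i=0}^{n-1} q_i - \tfrac1n$. Scaling the first identity by $\tfrac2n$ makes the two baselines coincide, so the difference simplifies to $\Adv^{nr\PSI}_{\Pi,\MA,\MO}(\kappa,n) - \tfrac2n\Adv^{\PSI}_{\Pi,\MA,\MO}(\kappa,n) = \tfrac1n\sum_{i=2}^{n-1} q_i$, which is nonnegative because every $q_i$ is a probability. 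This is precisely the claimed inequality, and the factor $\tfrac2n$ appears exactly because the random-challenge experiment lands on $P_0$ or $P_1$ only $2$ out of $n$ times.

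As for the main obstacle, there is no genuine technical difficulty here; the only step requiring care rather than computation is the identification of $q_i$ across the two experiments. One must verify that neither $\Gm^\PSI$ nor the standard oracles leak to $\MA$ any information about the distribution used to sample $c$, so that fixing $c=i$ induces an identically distributed transcript in both settings. Once this is granted, everything else is arithmetic; note in particular that the argument never invokes $\Adv^{\PSI}_{\Pi,\MA,\MO}(\kappa,n) \geq 0$, so the inequality remains valid even when $\MA$ performs below the guessing baseline.
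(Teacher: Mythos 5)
Your proposal is correct and follows essentially the same route as the paper: both decompose the success probability of the random-challenge game over the cases $c \in \{0,1\}$ versus $c \in \{2,\dots,n-1\}$, match the baselines via the factor $\tfrac{2}{n}$, and drop the nonnegative remainder term. Your explicit introduction of the per-challenge probabilities $q_i$ is just a notational repackaging of the paper's conditional-probability computation, and your observation that the argument does not need $\Adv^{\PSI}_{\Pi,\MA,\MO}(\kappa,n)\geq 0$ is a correct (if minor) refinement.
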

\begin{proof}
    \begin{align*}
        &\hspace{-5mm}\Adv^{nr\PSI}_{\Pi,\MA}(\kappa,n)\\
        \quad&= \Prr[c \leftarrow {[n-1]}]{\Gm^\PSI_{\Pi,\MA}(\kappa, n, c) = c} - \frac{1}{n}\\
        &= \frac{2}{n}\Prr[c \leftarrow {[1]}]{\Gm^\PSI_{\Pi,\MA}(\kappa, n, c) = c} + \frac{n-2}{n}\Prr[{c \leftarrow [2,n-1]}]{\Gm^\PSI_{\Pi,\MA}(\kappa, n, c) = c} - \frac{1}{n}\\
        &= \frac{2}{n}\left(\Prr[c \leftarrow {[1]}]{\Gm^\PSI_{\Pi,\MA}(\kappa, n, c) = c} - \frac{1}{2}\right) + \frac{n-2}{n}\Prr[{c \leftarrow [2,n-1]}]{\Gm^\PSI_{\Pi,\MA}(\kappa, n, c) = c}\\
        &= \frac{2}{n}\cdot \Adv^{\PSI}_{\Pi,\MA}(\kappa) + \frac{n-2}{n}\Prr[c \leftarrow \{2,\dots,n-1\}]{\Gm^\PSI_{\Pi,\MA}(\kappa, n, c) = c}\\
        &\geq \frac{2}{n}\cdot \Adv^{\PSI}_{\Pi,\MA, \MO}(\kappa, n),
    \end{align*}
    where $[2,n-1] = \{2, \dots, n-1\}$.
\end{proof}

\begin{theorem}\label{thm:nf2nr}
    For any adversary $\MA$, set of oracles $\MO$ and \DVS scheme $\Pi$, there exists an adversary $\MB$ such that
    \[ \frac{1}{2}\Adv^{nr\PSI}_{\Pi,\MA,\MO}(\kappa,n) \leq \Adv^{\PSI}_{\Pi,\MB,\MO}(\kappa,n).\]
\end{theorem}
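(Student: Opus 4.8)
The plan is to construct $\MB$ by a symmetrisation argument: $\MB$ embeds its own two-party challenge (a signature from $P_c$ with $c\in\{0,1\}$, designated to $P_n$) into a fully randomised $n$-party challenge for $\MA$, using the permuted oracles to relabel all parties consistently. Concretely, upon receiving $\params$ and $\pk_{P_0},\dots,\pk_{P_n}$, $\MB$ samples a uniformly random permutation $\pi$ of $[n-1]$, extended by $\pi(n)=n$ so that the verifier stays fixed, and runs $\MA$ while answering every oracle query through the permuted oracles $\MO^{(\pi,\cdot)}$ and presenting the public keys in the order $\pk_{P_{\pi(0)}},\dots,\pk_{P_{\pi(n)}}$. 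When $\MA$ outputs $m^*$, $\MB$ forwards it, receives its challenge $\sigma^* = \Sign[P_c][P_n](m^*)$, and hands it to $\MA$ unchanged.

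First I would argue that $\MA$'s view is exactly that of a genuine random-challenge instance. Since the relabelling is a consistent bijection fixing $P_n$, party $i$ in $\MA$'s world behaves identically to the real party $P_{\pi(i)}$; in particular the challenge, really produced by $P_c$, appears to $\MA$ as coming from sender $\pi^{-1}(c)$. Because $\pi$ is uniform, for each fixed $c$ the index $\pi^{-1}(c)$ is uniform over $[n-1]$, so $\MA$ faces a uniformly chosen sender, exactly as in $\Gm^{\PSI}(\kappa,n,\cdot)$ with $c\leftarrow[n-1]$. Writing $q$ for the probability that $\MA$ correctly guesses its sender, we therefore have $q-\tfrac1n = \Adv^{nr\PSI}_{\Pi,\MA,\MO}(\kappa,n)$.

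Next, $\MB$ translates $\MA$'s guess $a\in[n-1]$ back to the real label $g:=\pi(a)$, outputs $g$ if $g\in\{0,1\}$, and otherwise outputs a uniformly random bit. The key computation is the chance that $\MA$'s guess lands on the \emph{other} challenge sender. Conditioned on $\MA$ seeing sender $i$ (i.e. $\pi^{-1}(c)=i$), the competing index $\pi^{-1}(1-c)$ is uniform over $[n-1]\setminus\{i\}$, so the chance $\MA$ outputs it is $\frac{1}{n-1}\sum_{j\neq i}p_{ij} = \frac{1-p_{ii}}{n-1}$, where $p_{ij}$ is the probability $\MA$ outputs $j$ given sender $i$ and $q=\frac1n\sum_i p_{ii}$; averaging over the uniform $i$ gives $\Prr{g=1-c} = \frac{1-q}{n-1}$. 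Since $\Prr{g=c}=q$ and $\MB$ is correct with probability $\tfrac12$ whenever $g\notin\{0,1\}$, a short calculation yields $\Adv^{\PSI}_{\Pi,\MB,\MO}(\kappa,n) = \tfrac{q}{2}-\tfrac{1-q}{2(n-1)}$, and subtracting the target gives $\Adv^{\PSI}_{\Pi,\MB,\MO}(\kappa,n) - \tfrac12\Adv^{nr\PSI}_{\Pi,\MA,\MO}(\kappa,n) = \frac{nq-1}{2n(n-1)}$, which is $\geq 0$ exactly when $q\geq\tfrac1n$.

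The main obstacle is this last piece of advantage bookkeeping, and in particular controlling $\Prr{g=1-c}$: the reduction only helps if $\MA$'s mistaken guesses are spread out rather than concentrated on the single competing challenge index. The symmetrisation by $\pi$ is precisely what forces that competing index to be uniform, producing the favourable $\tfrac{1}{n-1}$ factor. The only remaining gap is the degenerate regime $\Adv^{nr\PSI}_{\Pi,\MA,\MO}(\kappa,n) < 0$, where $q<\tfrac1n$ and the above difference could be negative; there the bound is vacuous, and the trivial $\MB$ that ignores $\MA$ and outputs a random bit already achieves advantage $0 \geq \tfrac12\Adv^{nr\PSI}_{\Pi,\MA,\MO}(\kappa,n)$, so a suitable $\MB$ exists in all cases.
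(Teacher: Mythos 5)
Your proof is correct and follows essentially the same route as the paper's: the same symmetrisation by a uniform permutation fixing the verifier, the same key observation that the competing challenge index is uniform over the remaining $n-1$ labels so that $\Prr{g=1-c}=\frac{1-q}{n-1}$, and the same resulting bound $\frac{n}{2(n-1)}\Adv^{nr\PSI}_{\Pi,\MA,\MO}(\kappa,n)\geq\frac{1}{2}\Adv^{nr\PSI}_{\Pi,\MA,\MO}(\kappa,n)$ (your random tie-break bit versus the paper's fixed output $0$ is immaterial since the challenge bit is uniform). Your explicit treatment of the degenerate regime $\Adv^{nr\PSI}_{\Pi,\MA,\MO}(\kappa,n)<0$ is a small point the paper leaves implicit, and is a welcome addition.
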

\begin{proof}
    Here, we omit the subscripts $\Pi$ and $\MO$ for $\Adv$ and $\Gm$ for simplicity. Let $\MB$ be defined as in Games \ref{gm:nf2nrb1} and \ref{gm:nf2nrb2}.
    \begin{algorithm}[ht]\label{gm:nf2nrb1}
	\caption{$\MB^{\MO^{(1)}_{sign},\MO^{(1)}_{veri},\MO^{(1)}_{sim}}(1,\params, \pk_{P_0}, \dots, \pk_{P_n})$}
    \DontPrintSemicolon
    Pick a random permutation $\pi: [n] \mapsto [n]$ such that $\pi(n) = n$\;
    $(m^*, \state) \leftarrow\MA^{\MO^{(\pi,1)}_{sign},\MO^{(\pi,1)}_{veri},\MO^{(\pi,1)}_{sim}}(1,\params, \pk_{P_{\pi(0)}}, \dots, \pk_{P_{\pi(n)}})$\;
    Output $(m^*, (\pi, \state))$
    \end{algorithm}
    \begin{algorithm}[ht]\label{gm:nf2nrb2}
	\caption{$\MB^{\MO^{(2)}_{sign},\MO^{(2)}_{veri},\MO^{(2)}_{sim}}(2,\state', \sigma^*)$}
    \DontPrintSemicolon
    Parse $\state'$ as $(\pi,\state)$\;
    $c' \leftarrow \MA^{\MO^{(\pi,2)}_{sign},\MO^{(\pi,2)}_{veri},\MO^{(\pi,2)}_{sim}}(2,\state,\sigma^*)$\;
    \If{$\pi(c')\in\{0,1\}$}{Output $\pi(c')$}
    \Else{Output $0$}
    \end{algorithm}
    The permutation is used here to hide the indexation of the parties from the adversary. Note that applying a permutation $\pi$ in this fashion is equivalent to  generating the keypairs in the order $\pi^{-1}(0)\dots\pi^{-1}(n)$ and since these are i.i.d. samples the order of their generation does not affect the winning probability of $\MA$. However, it guarantees that the winning probability of $\MA$ is the same for every $c$. Note that here we use $\Pr_{\pi}$ to indicate the uniform probability over all $\pi: [n] \mapsto [n]$ such that $\pi(n)=n$.
    \begin{align*}
        &\hspace{-5mm}\Adv^{\PSI}_{\MB}(\kappa,n)\\ &= \Prr[c \leftarrow {[1]}]{\Gm^\PSI_{\MB}(\kappa, n, c) = c} - \frac{1}{2}\\
        &= \Prr[c \leftarrow {[1]},\pi]{\Gm^\PSI_{\MA}(\kappa, n, \pi^{-1}(c)) = \pi^{-1}(c)} \\&\qquad+ \frac{1}{2}\Prr[\pi]{\Gm^\PSI_{\MA}(\kappa, n, \pi^{-1}(0)) \not\in \{\pi^{-1}(0), \pi^{-1}(1)\}} - \frac{1}{2}\\
        &= \Prr[c \leftarrow {[n-1]}]{\Gm^\PSI_{\MA}(\kappa, n, c) = c} - \frac{1}{2}\Prr[\pi]{\Gm^\PSI_{\MA}(\kappa, n, \pi^{-1}(0)) \in \{\pi^{-1}(0), \pi^{-1}(1)\}}\\
        &= \frac{1}{2}\Prr[c \leftarrow {[n-1]}]{\Gm^\PSI_{\MA}(\kappa, n, c) = c} - \frac{1}{2}\Prr[\pi]{\Gm^\PSI_{\MA}(\kappa, n, \pi^{-1}(0)) = \pi^{-1}(1)}\\
        &= \frac{1}{2}\left(\Prr[c \leftarrow {[n-1]}]{\Gm^\PSI_{\MA}(\kappa, n, c) = c} - \frac{1}{n-1}\Prr[c \leftarrow {[n-1]}]{\Gm^\PSI_{\MA}(\kappa, n, c) \neq c}\right)\\
        &= \frac{1}{2}\left(\frac{n}{n-1}\Prr[c \leftarrow {[n-1]}]{\Gm^\PSI_{\MA}(\kappa, n, c) = c} - \frac{1}{n-1}\right)\\
        &=\frac{n}{2(n-1)}\Adv^{nr\PSI}_{\MA}(\kappa,n) \geq \frac{1}{2}\Adv^{nr\PSI}_{\MA}(\kappa,n)
    \end{align*}
\end{proof}

Combining Theorem \ref{thm:nr2nf} and Theorem \ref{thm:nf2nr}, we see that the advantages for fixed-challenge and random-challenge only differ by at most a linear factor. Thus these definitions are equivalent when considering negligible advantages.

\begin{corollary}
    For any $n \in \N$, an \SDVS scheme is $n$-party random-challenge sender private if and only if it is $n$-party sender private.
\end{corollary}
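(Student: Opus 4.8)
The plan is to derive the corollary directly by combining the two inequalities from Theorem \ref{thm:nr2nf} and Theorem \ref{thm:nf2nr}, instantiated with $\MO$ being the standard $n$-sender \PSI-oracles, and then invoking the fact that negligibility is preserved under multiplication by any fixed polynomial factor. Since both security notions quantify over all \PPT adversaries, I would prove the two implications separately.

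For the direction asserting that $n$-party sender privacy implies $n$-party random-challenge sender privacy, I would fix an arbitrary adversary $\MA$ and apply Theorem \ref{thm:nf2nr} to obtain an adversary $\MB$ satisfying $\Adv^{nr\PSI}_{\Pi,\MA,\MO}(\kappa,n) \leq 2\,\Adv^{\PSI}_{\Pi,\MB,\MO}(\kappa,n)$. By the assumption of $n$-party sender privacy the right-hand side is negligible, and twice a negligible function is again negligible, so $\Adv^{nr\PSI}_{\Pi,\MA,\MO}(\kappa,n)$ is negligible as required. The only point requiring care is that $\MB$ must itself be a valid \PPT adversary; this is guaranteed because $\MB$, as constructed in Theorem \ref{thm:nf2nr}, merely samples a permutation $\pi$ with $\pi(n)=n$, runs $\MA$ once through the permuted oracles, and performs polynomial-time bookkeeping.

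For the converse, I would again fix an arbitrary adversary $\MA$ and rearrange the inequality of Theorem \ref{thm:nr2nf} into $\Adv^{\PSI}_{\Pi,\MA,\MO}(\kappa,n) \leq \tfrac{n}{2}\,\Adv^{nr\PSI}_{\Pi,\MA,\MO}(\kappa,n)$. Since $n$ is fixed (and in any case at most polynomial in $\kappa$) and the random-challenge advantage is negligible by hypothesis, the product remains negligible, which yields $n$-party sender privacy.

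The argument is essentially bookkeeping, so I expect the main obstacle to be conceptual rather than technical: one must keep track of the fact that in the first implication the reduction replaces $\MA$ by a \emph{different} adversary $\MB$, so the conclusion relies on the universal quantification over adversaries in the definition of $n$-party sender privacy, whereas in the converse the \emph{same} adversary $\MA$ is reused and no reduction is needed. Once this asymmetry is noted, the two-sided polynomial relation between the advantages shows that one is negligible for all adversaries exactly when the other is, giving the claimed equivalence.
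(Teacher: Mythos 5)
Your proposal is correct and follows essentially the same route as the paper, which simply combines Theorem~\ref{thm:nr2nf} and Theorem~\ref{thm:nf2nr} and observes that the advantages differ by at most a linear factor in $n$, so negligibility of one (over all adversaries) is equivalent to negligibility of the other. Your additional remark about the asymmetry --- one direction reusing the same adversary, the other requiring the reduction to a new adversary $\MB$ and hence the universal quantification --- is a correct and worthwhile clarification of what the paper leaves implicit.
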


For adversarially-chosen challenges, we could try to simply consider only the cases where the adversary chooses $P_0$ and $P_1$ as the challenge senders and $P_n$ as the challenge verifier. However, an adversary could be crafted to never choose this exact combination of parties. Thus, we hide the indexation of the parties under a random permutation. This is done only for the proof and has no impact on the actual definition, as all parties' keypairs are i.i.d. samples. Since the adversary does not know this permutation, the chance of them picking these parties is in the order of $n^{-3}$ and thus a loss of this order is incurred in the advantage.

\begin{theorem}\label{thm:nf2adv}
    For any adversary $\MA$ and set of oracles $\MO$, there exists an adversary $\MB$ such that
    \[ \frac{2}{n^3 - n} \cdot \Adv^{\ADVPSI}_{\Pi,\MA, \MO}(\kappa,n) \leq \Adv^{\PSI}_{\Pi,\MB, \MO}(\kappa,n)\]
\end{theorem}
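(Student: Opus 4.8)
The plan is to adapt the permutation-based reduction from the proof of Theorem~\ref{thm:nf2nr} to the adversarially-chosen setting, letting $\MB$ play $\Gm^\PSI$ while internally simulating $\Gm^\ADVPSI$ for $\MA$. Concretely, $\MB$ first draws a uniformly random permutation $\pi:[n]\mapsto[n]$ --- this time \emph{without} the constraint $\pi(n)=n$, since in the adversarial game the verifier is no longer fixed --- and runs $\MA$'s first phase on the permuted keys $\pk_{P_{\pi(0)}},\dots,\pk_{P_{\pi(n)}}$ with the permuted oracles $\MO^{(\pi,1)}_{sign},\MO^{(\pi,1)}_{veri},\MO^{(\pi,1)}_{sim}$. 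When $\MA$ returns $(m^*, s_0, s_1, r, \state)$, $\MB$ forwards $m^*$ as its own challenge message and declares the run \emph{aligned} if $\pi(r)=n$ and $\{\pi(s_0),\pi(s_1)\}=\{0,1\}$. On receiving $\sigma^* = \Sign[P_c][P_n](m^*)$, if the run is aligned $\MB$ runs $\MA$'s second phase with the phase-$2$ permuted oracles to obtain a guess $c'$, and outputs $c'$ when $\pi(s_0)=0$ and $1-c'$ when $\pi(s_0)=1$; otherwise $\MB$ ignores $\sigma^*$ and outputs a uniformly random bit.

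Correctness hinges on the same observation as Theorem~\ref{thm:nf2nr}: because all keypairs are i.i.d., relabelling parties by $\pi$ leaves the distribution of $\MA$'s view unchanged, so $\MA$'s first-phase output $(s_0,s_1,r)$ (in its own labelling) is independent of $\pi$. I would first verify that, conditioned on the run being aligned, $\sigma^*=\Sign[P_c][P_n](m^*)$ is distributed exactly as a genuine \ADVPSI challenge $\Sign[P_{s_{c''}}][P_r](m^*)$ for $\MA$ with a uniform hidden bit $c''$: since true party $P_i$ appears to $\MA$ as view-party $\pi^{-1}(i)$, alignment forces the verifier $P_n$ to be seen as $r$, and forces $c''=c$ when $\pi(s_0)=0$ but $c''=1-c$ when $\pi(s_0)=1$ --- which is precisely why $\MB$ flips its output in the latter orientation. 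In both orientations $\MB$ wins $\Gm^\PSI$ exactly when $\MA$ wins its simulated game.

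It then remains to count the aligned event. Assuming $\MA$ outputs pairwise-distinct indices $s_0,s_1,r$ (the only meaningful case; a degenerate choice gives $\MA$ no advantage and can only slacken the inequality), independence of $(s_0,s_1,r)$ from $\pi$ lets me count permutations directly: fixing $\pi(r)=n$ and mapping $\{s_0,s_1\}$ onto $\{0,1\}$ in one of two orders leaves $(n-2)!$ choices for the remaining parties, so
\[ \Prr{\text{aligned}} = \frac{2\,(n-2)!}{(n+1)!} = \frac{2}{n(n-1)(n+1)} = \frac{2}{n^3-n}. \]
Because this conditional probability is the same constant for every distinct triple, conditioning on alignment neither reweights $\MA$'s first-phase view nor disturbs the uniformity of the hidden bit, so $\MA$'s win probability given alignment equals its unconditional \ADVPSI success probability $p_\MA$; conditioned on non-alignment $\MB$ wins with probability $\tfrac12$. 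Combining,
\[ \Prr{\MB\text{ wins}} = \frac{2}{n^3-n}\,p_\MA + \left(1-\frac{2}{n^3-n}\right)\frac{1}{2} = \frac12 + \frac{2}{n^3-n}\,\Adv^{\ADVPSI}_{\Pi,\MA,\MO}(\kappa,n), \]
which yields $\Adv^{\PSI}_{\Pi,\MB,\MO}(\kappa,n) = \frac{2}{n^3-n}\Adv^{\ADVPSI}_{\Pi,\MA,\MO}(\kappa,n)$ and in particular the stated inequality.

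I expect the main obstacle to be the independence/conditioning argument rather than the combinatorics: one must be careful that conditioning on the aligned event preserves both $\MA$'s first-phase output distribution and the uniformity of $c$, and that the two orientations of $\{s_0,s_1\}$ onto $\{0,1\}$ are correctly compensated by $\MB$'s conditional output flip. Cleanly handling non-distinct adversarial choices (so that they only weaken the bound) is the secondary technical point that motivates stating the result with $\leq$ rather than equality.
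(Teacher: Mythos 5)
Your proposal is correct and follows essentially the same route as the paper's proof: the same permutation-hiding reduction, the same alignment event $\{\pi(s_0),\pi(s_1)\}=\{0,1\}\wedge\pi(r)=n$ with probability $\frac{2(n-2)!}{(n+1)!}=\frac{2}{n^3-n}$, the same orientation-dependent flip of $c'$, and the same final accounting with a random bit on non-aligned runs. The only minor differences are that you explicitly run $\MA$'s second phase on the \emph{permuted} phase-2 oracles and explicitly flag the degenerate case of non-distinct $(s_0,s_1,r)$, both points the paper glosses over.
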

\begin{proof}
    Fix $\MA$. Let $\MB$ be defined as:
    \begin{algorithm}[ht]
	\caption{$\MB^{\MO^{(1)}_{sign},\MO^{(1)}_{veri},\MO^{(1)}_{sim}}(1,\params, \pk_{P_0}, \dots, \pk_{P_n})$}
    \DontPrintSemicolon
    Pick a random permutation $\pi: [n] \mapsto [n]$\;
    $(m^*, s_0, s_1, r, \state) \leftarrow\MA^{\MO^{(\pi,1)}_{sign},\MO^{(\pi,1)}_{veri},\MO^{(\pi,1)}_{sim}}(1,\params, \pk_{P_{\pi(0)}}, \dots, \pk_{P_{\pi(n)}})$\;
    \If{
        $\pi(s_0) = 0 \land \pi(s_1) = 1 \land \pi(r) = n$
    }{
        Output $(m^*, (0,\state))$\;
    }
    \ElseIf{
        $\pi(s_0) = 1 \land \pi(s_1) = 0 \land \pi(r) = n$
    }{
        Output $(m^*, (1,\state))$\;
    }
    \Else{
        Output $(m^*, (2,\state))$\;
    }
    \end{algorithm}
    \begin{algorithm}[ht]
	\caption{$\MB^{\MO^{(2)}_{sign},\MO^{(2)}_{veri},\MO^{(2)}_{sim}}(2,\state', \sigma^*)$}
    \DontPrintSemicolon
    Parse $\state'$ as $(b,\state)$\;
    $c' \leftarrow \MA^{\MO^{(2)}_{sign},\MO^{(2)}_{veri},\MO^{(2)}_{sim}}(2,\state,\sigma^*)$\;
    \If{$b=0$}{
        Output $c'$\;
    }
    \ElseIf{$b=1$}{
        Output $1-c'$\;
    }
    \Else{
        $c'' \leftarrow \{0,1\}$\;
        Output $c''$\;
    }
    \end{algorithm}
    
    The permutation is used here to hide the indexation of the parties from the adversary. Note that applying a permutation $\pi$ in this fashion is equivalent to  generating the keypairs in the order $\pi^{-1}(0)\dots\pi^{-1}(n)$ and since these are i.i.d. samples the order of their generation does not affect the winning probability of $\MA$. When playing game $\Gm^\PSI_{\Pi,\MB}$, we can now distinguish two cases:
    \begin{enumerate}
        \item $\{\pi(s_0), \pi(s_1)\} = \{0,1\}$ and $\pi(r) = n$. Since $\pi$ is random and unknown to $\MA$, this happens with probability $\frac{2(n-2)!}{(n+1)!}$. In this case, $\MA$ has chosen $P_0$ and $P_1$ as the possible signers and $P_n$ as the verifier, making $\Gm^\ADVPSI_{\Pi, \MA}$ and $\Gm^{\PSI}_{\Pi, \MB}$ equivalent.
        \item Otherwise, $\MA$ has chosen different signers or verifiers, in which case $\Gm^{\PSI}_{\Pi, \MB}$ becomes equivalent to a random coin flip, with probability $\frac{1}{2}$ of guessing $c$.
    \end{enumerate}
    Combining this, we get that
    \begin{align*}
        &\Prr[c \leftarrow \{0,1\}]{\Gm^{\PSI}_{\Pi, \MB,\MO}(\kappa, n, c) = c} =\\
        &\qquad\qquad\frac{2(n-2)!}{(n+1)!} \Prr[c \leftarrow \{0,1\}]{\Gm^{\ADVPSI}_{\Pi, \MA, \MO}(\kappa, n, c) = c} + \left(1 - \frac{2(n-2)!}{(n+1)!}\right)\frac{1}{2}.
    \end{align*}
    Thus,
     \[ \Adv^{\PSI}_{\Pi,\MB,\MO}(\kappa,n) = \frac{2(n-2)!}{(n+1)!} \cdot \Adv^{\ADVPSI}_{\Pi,\MA,\MO}(\kappa,n).\]
\end{proof}

\begin{corollary}
    For any $n \in \N$, an \SDVS scheme is $n$-party adversarial-challenge sender private if and only if it is $n$-party sender private.
\end{corollary}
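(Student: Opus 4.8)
The plan is to prove the biconditional by establishing its two implications separately, and essentially all of the substantive work is already contained in Theorem~\ref{thm:nf2adv}. The direction ``$n$-party sender private $\Rightarrow$ $n$-party adversarial-challenge sender private'' is precisely the reduction built in that theorem, so for this half the task is only to repackage the advantage inequality as a negligibility statement. The converse direction is the genuinely easy one: the fixed-challenge game $\Gm^\PSI$ is nothing but the adversarial-challenge game $\Gm^\ADVPSI$ restricted to the single choice $(s_0,s_1,r)=(0,1,n)$, so any $\PSI$-adversary embeds into an $\ADVPSI$-adversary with no loss in advantage.

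For the forward direction, I would fix $n\in\N$ and assume $\Pi$ is $n$-party sender private, i.e. $\Adv^{\PSI}_{\Pi,\MB}(\kappa,n)\leq\negl(\kappa)$ for every adversary $\MB$. Given an arbitrary adversary $\MA$ against the adversarial-challenge game, Theorem~\ref{thm:nf2adv} supplies an adversary $\MB$ with $\frac{2}{n^3-n}\Adv^{\ADVPSI}_{\Pi,\MA,\MO}(\kappa,n)\leq\Adv^{\PSI}_{\Pi,\MB,\MO}(\kappa,n)$. Since $n$ is fixed, the factor $\frac{n^3-n}{2}$ is a constant, so $\Adv^{\ADVPSI}_{\Pi,\MA,\MO}(\kappa,n)\leq\frac{n^3-n}{2}\Adv^{\PSI}_{\Pi,\MB,\MO}(\kappa,n)$ is still negligible, whence $\Pi$ is $n$-party adversarial-challenge sender private.

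For the converse, I would assume $\Pi$ is $n$-party adversarial-challenge sender private and let $\MA$ be any adversary against $\Gm^\PSI$. I would build $\MB$ that runs $\MA$ verbatim, forwards every standard oracle query unchanged, and in its first phase returns $\MA$'s message $m^*$ together with the fixed indices $s_0=0$, $s_1=1$, $r=n$; in the second phase it hands $\MA$ the challenge signature and outputs $\MA$'s guess. Because the $\ADVPSI$ challenge is then $\Sign[P_{s_c}][P_r](m^*)=\Sign[P_c][P_n](m^*)$, exactly the $\PSI$ challenge, and the challenge bit $c\leftarrow\{0,1\}$ is distributed identically, $\MB$ wins precisely when $\MA$ does, giving $\Adv^{\PSI}_{\Pi,\MA,\MO}(\kappa,n)=\Adv^{\ADVPSI}_{\Pi,\MB,\MO}(\kappa,n)\leq\negl(\kappa)$. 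I expect essentially no obstacle here: the entire difficulty of the equivalence is absorbed into Theorem~\ref{thm:nf2adv}, whose permutation trick handles the fact that a malicious $\MA$ may deliberately avoid the index choice $(0,1,n)$, and the only remaining points are to confirm that the polynomial advantage loss is harmless for negligibility and that the reverse embedding is faithful on both the oracle interface and the challenge distribution.
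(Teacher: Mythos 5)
Your proposal is correct and follows essentially the same route as the paper: one direction is exactly the reduction of Theorem~\ref{thm:nf2adv} combined with the observation that the $O(n^3)$ loss preserves negligibility for fixed $n$, and the other is the trivial embedding that always outputs $s_0=0$, $s_1=1$, $r=n$ with no loss in advantage. No gaps.
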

\begin{proof}
    Theorem \ref{thm:nf2adv} shows that if a scheme is sender private then it is also adversarial-challenge sender private since the advantage differs by a factor $\MO(n^3)$. The other direction is trivial, as any adversary for sender-privacy can trivially be transformed into an adversary for adversarial-challenge sender privacy, always outputting $s_0 = 0, s_1 = 1, r = n$, which gives both adversaries the exact same winning probability.
\end{proof}

\section{Alternative oracles}\label{sec:altsettings}

In this section we show that one can use other properties of \SDVS schemes, e.g. non-transferability and unforgeability, to provide equally strong sender-privacy while giving the adversary weaker oracles. This allows us to more easily prove that existing schemes satisfy our definition. Note that in this section we only consider the cases where the security advantages are negligible. First, we will focus on the verification oracle, showing that they can be removed without impacting the quality of the security when the scheme is unforgeable. Then, we show that the number of parties can be limited to 3 ($n=2$) when a scheme is both unforgeable and non-transferable.

\begin{definition}
    A \DVS scheme $\Pi = (\KeyGen, \Sign, \Verify, \Simulate)$ is \emph{$n$-party strongly-unforgeable with respect to $\MO$} if for any adversary $\MA$,
    \[ \Adv^\UF_{\Pi,\MA,\MO}(\kappa,n) = \Prr{\Gm^\UF_{\Pi,\MA,\MO}(\kappa,n) = \top} \leq \negl(\kappa),\]
    where the game $\Gm^\UF_{\Pi, \MA, \MO}$ is defined in Game \ref{gm:su}.
    \begin{algorithm}[h]\label{gm:su}
    	\caption{$\Gm^\UF_{\Pi, \MA, \MO}(\kappa, n)$}
        \DontPrintSemicolon
        $\params \leftarrow \Setup$\;
        $(\pk_{P_0}, \sk_{P_0}) \leftarrow \KeyGen; \dots; (\pk_{P_n}, \sk_{P_n}) \leftarrow \KeyGen$\;
        $(m^*, \sigma^*,s,v) \leftarrow \MA^{\MO_{sign},\MO_{veri},\MO_{sim}}(\params, \pk_{P_0}, \dots, \pk_{P_n})$\;
        \If{$\Verify[P_s][P_v](m^*, \sigma^*) = 1$ and $\forall i: \sigma^* \neq \sigma_i$}{Output $\top$.}
        \Else{Output $\bot$.}
    \end{algorithm}
    A \DVS scheme is \emph{$n$-party strongly-unforgeable} if it is $n$-party strongly-unforgeable with respect to $\MO^{(1)}_{sign}$, $\MO^{(1)}_{sim}$, $\MO^{(1)}_{veri}$ from the standard $n$-sender \PSI-oracles.
\end{definition}


\begin{theorem}\label{thm:uf2veri}
    Let $n\in \N$ and $\MO = \{\MO^{(1)}_{sign}, \MO^{(2)}_{sign}, \MO^{(1)}_{sim}, \MO^{(2)}_{sim}, \MO^{(1)}_{veri}, \MO^{(2)}_{veri}\}$ be the $n$-sender standard oracles. Any \DVS scheme that is $n$-party sender private with respect to $\MO' = \{\MO^{(1)}_{sign}, \MO^{(2)}_{sign}, \MO^{(1)}_{sim}, \MO^{(2)}_{sim}, \MO^{'(1)}_{veri} = \emptyset, \MO^{'(2)}_{veri} = \emptyset\}$ and strongly unforgeable is $n$-party sender private (with respect to $\MO$).
\end{theorem}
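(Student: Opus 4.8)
The plan is to prove the statement by a black-box reduction that eliminates the verification oracles through self-simulation, in direct analogy with how one removes a decryption oracle in the corresponding encryption arguments. Fix an adversary $\MA$ playing $\Gm^\PSI_{\Pi,\MA,\MO}$, i.e.\ with the full standard oracles including $\MO^{(1)}_{veri}$ and $\MO^{(2)}_{veri}$. I would build an adversary $\MB$ playing $\Gm^\PSI_{\Pi,\MB,\MO'}$, the same game but with the verification oracles replaced by $\emptyset$, which runs $\MA$ internally. $\MB$ forwards every signing and simulation query of $\MA$ to its own $\MO_{sign}$ and $\MO_{sim}$, recording in a list $L$ each queried tuple $(m_i,s,v)$ together with the returned signature $\sigma_i$. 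It relays $\MA$'s challenge message $m^*$, receives $\sigma^*$, relays it back, and finally outputs $\MA$'s guess $c'$ verbatim. The only queries $\MB$ cannot forward are the verification queries, which it answers itself: on input $(m_i,\sigma_i,s,v)$ it returns valid iff $(m_i,\sigma_i,s,v)\in L$ and invalid otherwise, while additionally returning $\bot$ whenever $\sigma_i=\sigma^*$ so that the simulated oracle respects the same phase-2 interface as $\MO^{(2)}_{veri}$.

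The heart of the argument is to show that this simulated verification oracle agrees with the real one except with negligible probability, so that $\MA$'s view inside $\MB$ is statistically close to its view in $\Gm^\PSI_{\Pi,\MA,\MO}$ and hence $\Adv^{\PSI}_{\Pi,\MA,\MO}(\kappa,n)\leq \Adv^{\PSI}_{\Pi,\MB,\MO'}(\kappa,n)+\Pr[\mathsf{bad}]$, where $\mathsf{bad}$ is the event that some verification query receives a different answer from $\MB$ than from the real oracle. A disagreement has two forms. If $\MB$ answers valid but the true answer is invalid, then a signature returned by $\MO_{sign}$ or $\MO_{sim}$ on a tuple fails to verify on that tuple; I would rule this out by correctness of $\Sign$ and $\Simulate$ with respect to $\Verify$. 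If $\MB$ answers invalid but the true answer is valid, then $\MA$ submitted a tuple $(m_i,\sigma_i,s,v)\notin L$ with $\Verify[P_s][P_v](m_i,\sigma_i)=1$, which is exactly a strong forgery. I would formalise this with a forger $\MF$ against $\Gm^\UF$ that runs $\MA$ in a faithful copy of the real $\MO$-game, using its own $\MO_{sign}$, $\MO_{sim}$, $\MO_{veri}$ to answer queries and to produce the challenge via a single signing query, and that halts on the first verification query whose signature is valid yet was never returned by an oracle; this contributes at most $\Adv^\UF_{\Pi,\MF,\MO}(\kappa,n)\leq\negl(\kappa)$. Combining with $n$-party sender privacy with respect to $\MO'$ yields the claim.

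The step I expect to be the main obstacle is arguing that $\mathsf{bad}$ is genuinely captured by strong unforgeability in every case. The clean case is a genuinely fresh valid signature, handled by $\MF$ above; the delicate case is a verification query $(m_i,\sigma_i,s,v)$ whose true answer is valid but whose signature $\sigma_i$ was in fact returned earlier by $\MO_{sign}$ or $\MO_{sim}$ for a \emph{different} tuple $(m',s',v')$. Here $\MB$ answers invalid while the real oracle answers valid, yet $\sigma_i$ is not fresh, so the literal winning condition $\sigma^*\neq\sigma_i$ of $\Gm^\UF$ is not met and no forgery is extracted from $\MF$. I would resolve this by reading strong unforgeability at the level of the full tuple $(m,\sigma,s,v)$ — equivalently, by recording and comparing tuples rather than bare signatures in both $L$ and the forgery test — so that such a cross-tuple resubmission becomes a genuine forgery; alternatively, if one adopts the mild correctness property that a signature verifies only for the sender, verifier, and message it was generated for, this case cannot arise at all and the bare-signature formulation already suffices.
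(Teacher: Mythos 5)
Your proposal matches the paper's proof in essence: the paper likewise replaces the verification oracles by a list-based self-simulation (answering $\top$ iff $(m_i,\sigma_i)$ was previously returned by a signing or simulation query, $\bot$ otherwise) and extracts a forger from any verification query on a valid signature that no oracle ever output, charging the disagreement probability to strong unforgeability. The cross-tuple subtlety you flag at the end is genuine, but the paper's own proof silently glosses over it --- it records only the bare pair $(m_i,\sigma_i)$ and its forger only fires on signatures never returned by any oracle, so a previously issued signature that happens to verify under a different sender/verifier pair escapes both the list and the freshness condition of $\Gm^\UF$ --- meaning your tuple-level reading of the forgery condition (or the added correctness assumption) is the more careful version of the same argument, not a deviation from it.
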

\begin{proof}
    Fix $n \in \N$. Suppose \DVS scheme $\Pi$ is $n$-party sender private with respect to $\MO' = \{\MO^{(1)}_{sign}, \MO^{(2)}_{sign}, \MO^{(1)}_{sim}, \MO^{(2)}_{sim}, \MO^{'(1)}_{veri} = \emptyset, \MO^{'(2)}_{veri} = \emptyset\}$ and strongly unforgeable, but not $n$-party sender private with respect to $\MO$. Then there exists an adversary $\MA$ such that $\Adv^\PSI_{\Pi, \MA, \MO}(\kappa) \not\leq \negl(\kappa)$. Let $\MA'$ be $\MA$, except every query $\MO^{(b)}_{veri}(m_i, \sigma_i, s, v)$ is replaced with $\top$ if $(m_i, \sigma_i)$ was the result of a signing or simulating oracle query and $\bot$ otherwise. Since $\MA'$ no longer uses the verification oracles, we have $\Adv^\PSI_{\Pi, \MA', \MO} = \Adv^\PSI_{\Pi, \MA', \MO'} \leq \negl(\kappa)$, i.e. $\MA'$ has the same advantage with respect to $\MO$ and $\MO'$, as they only differ in the verification oracles.
    
    Now consider the adversary $\MB$, who intends to create a forged signature. $\MB$ runs $\MA$, recording all signing and simulating queries. Whenever $\MA$ makes a verification query for a valid signature that was not the result of a signing or simulating query, $\MB$ outputs this signature and halts. Note that the only difference in the behavior of $\MA$ and $\MA'$ can occur when $\MA$ makes such a query. Since the difference between $\Adv^\PSI_{\Pi, \MA', \MO}$ and $\Adv^\PSI_{\Pi, \MA, \MO}$ is more than negligible, we have that such a query occurs with more than negligible probability, giving $\MB$ a more than negligible probability of constructing a forgery. This contradicts the fact that $\Pi$ is strongly unforgeable.
\end{proof}

\begin{theorem}
    Any \DVS scheme $\Pi$ that is $2$-party sender private, strongly unforgeable, and computationally non-transferable is $n$-party sender private for any $n \geq 2$.
\end{theorem}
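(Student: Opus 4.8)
The plan is to peel off the two obstacles---the verification oracles and the extra parties---one at a time, using unforgeability for the first and non-transferability for the second. Since $\Pi$ is strongly unforgeable, Theorem~\ref{thm:uf2veri} tells us that it suffices to prove $\Pi$ is $n$-party sender private with respect to the verification-free oracles $\MO'$ (the standard oracles with $\MO^{(1)}_{veri}$ and $\MO^{(2)}_{veri}$ replaced by $\emptyset$); the full-oracle statement then follows. Moreover, being $2$-party sender private with the standard oracles trivially implies being $2$-party sender private with respect to $\MO'$, since deleting oracles only weakens the adversary. So the whole task collapses to: \emph{$2$-party sender privacy (w.r.t.\ $\MO'$) $+$ computational non-transferability $\Rightarrow$ $n$-party sender privacy (w.r.t.\ $\MO'$)}.

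Suppose for contradiction that some $\MA$ has non-negligible advantage in $\Gm^\PSI_{\Pi,\MA,\MO'}(\kappa,n,\cdot)$. I would build a $2$-party adversary $\MB$ that embeds the three ``real'' parties of its own game---$P_0,P_1$ and the verifier $P_2$---as the challenge senders $P_0,P_1$ and the verifier $P_n$ seen by $\MA$, while generating the remaining keypairs $(\pk_{P_i},\sk_{P_i})$ for indices $2 \le i \le n-1$ itself. $\MB$ forwards the public keys, relays $\MA$'s challenge message $m^*$, and hands $\MA$ the received challenge $\sigma^* = \Sign[P_c][P_2](m^*)$ as a signature from $P_c$ to $P_n$, finally echoing $\MA$'s guess. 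The only real work is answering $\MA$'s signing and simulation queries $(m,s,v)$. Call $\{P_0,P_1,P_n\}$ the \emph{embedded} parties and the rest the \emph{owned} parties. If $s,v$ are both embedded, $\MB$ relays to its own oracle; if both owned, it computes the answer from keys it holds. The two remaining cases are exactly those in Figure~\ref{fig:npartyadv}: a signing query with an embedded signer and an owned verifier needs $\sk_{P_s}$, which $\MB$ lacks, and a simulation query with an owned signer and an embedded verifier needs $\sk_{P_v}$, which it also lacks. For these, $\MB$ \emph{swaps}: it answers the first with $\Simulate[P_s][P_v](m)$ (computable from the owned $\sk_{P_v}$) and the second with $\Sign[P_s][P_v](m)$ (computable from the owned $\sk_{P_s}$).

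The heart of the argument---and the step I expect to be hardest---is showing that these swaps perturb $\MA$'s view only negligibly, so that $\MB$'s advantage stays within a negligible distance of $\MA$'s. This is where computational non-transferability enters, via a hybrid that rewrites the swapped queries back to honest ones one at a time. Two features keep the reduction clean. First, each hybrid step changes a single answer, between $\Sign[P_s][P_v](m)$ and $\Simulate[P_s][P_v](m)$, which is precisely the object that $\Gm^\NT$ produces. Second, $\Gm^\NT$ hands the distinguisher \emph{both} secret keys $\sk_S,\sk_V$, so a non-transferability distinguisher can assign the challenger's pair to the two parties of the targeted query, generate every other keypair itself, and thereby simulate the entire hybrid experiment---the challenge $\sigma^*$ and all non-targeted queries included---faithfully, embedding the $\Gm^\NT$ challenge at just that one query. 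Organising the hybrids over the polynomially many problematic party-pairs and the polynomially many queries to each, and summing the per-step non-transferability advantages, bounds the total perturbation by a polynomial multiple of a negligible quantity, hence by a negligible quantity.

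Putting the pieces together, $\MB$'s advantage against $2$-party sender privacy with respect to $\MO'$ is at least $\MA$'s advantage minus this negligible loss, which is non-negligible, contradicting the hypothesis. Hence $\Pi$ is $n$-party sender private with respect to $\MO'$, and by Theorem~\ref{thm:uf2veri} with respect to the full standard oracles, for every $n \ge 2$. The details to get right are the bookkeeping of which queries count as ``problematic'' and that this is consistent across both phases (eased here by $\MO^{(1)}_{sign}=\MO^{(2)}_{sign}$ and $\MO^{(1)}_{sim}=\MO^{(2)}_{sim}$, so that no cross-phase restriction is violated), and checking that the swapped answer carries the same conditioning on $m$, $s$, $v$ as the genuine $n$-party game; the $n=2$ case is the trivial identity embedding with no owned parties.
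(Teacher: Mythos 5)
Your proposal is correct and follows essentially the same route as the paper's proof: invoke Theorem~\ref{thm:uf2veri} to discard the verification oracles, embed the two challenge senders and the verifier into the $2$-party game while generating the remaining keypairs locally, answer the key-missing queries by swapping $\Sign$ and $\Simulate$, and bound the resulting perturbation by a hybrid argument reducing each single swap to computational non-transferability (which works precisely because $\Gm^\NT$ hands the distinguisher both secret keys). The only cosmetic difference is that you sum per-step \NT-advantages in the standard way, whereas the paper locates a single hybrid index where the advantage drops from non-negligible to negligible; both yield the same contradiction.
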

\begin{proof}
    Suppose a \DVS scheme $\Pi$ is $2$-party sender private, strongly unforgeable, and computationally non-transferable. Assume towards a contradiction that $\Pi$ is not $n$-party sender private for some fixed $n > 2$. By Theorem \ref{thm:uf2veri}, this means $\Pi$ is also not $n$-party sender private with respect to 
    \[\MO' = \{\MO^{(1)}_{sign}, \MO^{(2)}_{sign}, \MO^{(1)}_{sim}, \MO^{(2)}_{sim}, \MO^{'(1)}_{veri} = \emptyset, \MO^{'(2)}_{veri} = \emptyset\}.\]
    Thus, there exists and adversary $\MA$ such that $\Adv^\PSI_{\Pi, \MA, \MO'}(\kappa, n) \not\leq \negl(\kappa)$. Let $\MA'(1,\params, \pk_{P_0}, \pk_{P_1}, \pk_{P_2})$ be as follows: First, sample $n-2$ keypairs $(\sk'_{P_2}, \pk'_{P_2})$ \dots $(\sk'_{P_{n-1}}, \pk'_{P_{n-1}})$ representing parties $P'_2 \dots P'_{n-1}$ and set $P'_0 = P_0$, $P'_1 = P_1$, $P'_n = P_2$. Then, run $\MA$ with the oracles $\MO''$ defined as follows, with $b = 1,2$:
    \begin{itemize}
        \item $\MO^{''(b)}_{veri} = \emptyset$.
        \item $\MO^{''(b)}_{sign}(m_i, s, v):$
        \begin{itemize}
            \item If $s,v \in \{0,1,n\}$, return $\MO^{(b)}_{sign}(m_i, \max(2, s), \max(2,v))$.
            \item If $s \in \{2, \dots, n-1\}$ and $v\in [n]$, return $\Sign[P'_s][P'_v](m_i)$.
            \item If $s \in \{0,1,n\}$ and $v \in \{2, \dots, n-1\}$, return  $\Simulate[P'_s][P'_v](m_i)$.
            \item Else, return $\bot$.
        \end{itemize}
        \item $\MO^{''(b)}_{sim}(m_i, s, v):$
        \begin{itemize}
            \item If $s,v \in \{0,1,n\}$, return $\MO^{(b)}_{sim}(m_i, \max(2, s), \max(2,v))$.
            \item If $v \in \{2, \dots, n-1\}$ and $s\in [n]$, return $\Simulate[P'_s][P'_v](m_i)$.
            \item If $v \in \{0,1,n\}$ and $s \in \{2, \dots, n-1\}$, return  $\Sign[P'_s][P'_v](m_i)$.
            \item Else, return $\bot$.
        \end{itemize}
    \end{itemize}
    
    Note that these oracles make use of the fact that one can simulate or sign a signature without, respectively, the sender's or verifier's secret key. Thus we circumvent the issue mentioned in Section \ref{sec:multipartySP}. In the oracles, $\max$ is used here to map $n$ to $2$, as $n$ and $2$ are the challenge verifiers in the $n$- and $2$-party respectively.
    
    Since $\Pi$ is 2-party sender private, we have $\Adv^\PSI_{\Pi, \MA', \MO''}(\kappa, 2) \leq \negl(\kappa)$. When we replace all oracle calls by their respective functionality, then $\Gm^\PSI_{\Pi, \MA, \MO'}(\kappa, n, c)$ and $\Gm^\PSI_{\Pi, \MA', \MO''}(\kappa, 2, c)$ differ, up to relabeling of the parties, only in one way : some $\Sign$ executions in $\Gm^\PSI_{\Pi, \MA, \MO'}(\kappa, n, c)$ have been replaced by $\Simulate$ in $\Gm^\PSI_{\Pi, \MA', \MO''}(\kappa, 2, c)$ and vice versa. Suppose $i \in \N$ such replacements have been made, then for $0 \leq j \leq i$ let $\Gm_j(\kappa, c)$ be $\Gm^\PSI_{\Pi, \MA, \MO'}(\kappa, n, c)$ with only the first $j$ such replacements made, i.e. $\Gm_0(\kappa, c) = \Gm^\PSI_{\Pi, \MA, \MO'}(\kappa, n, c)$ and $\Gm_i(\kappa, c) = \Gm^\PSI_{\Pi, \MA', \MO''}(\kappa, 2, c)$. Since, by construction, $\Prr{\Gm_0(\kappa, c) = c} - \frac{1}{2} \not\leq \negl(\kappa)$ and $\Prr{\Gm_i(\kappa, c) = c} - \frac{1}{2} \leq \negl(\kappa)$, we can fix a lowest $k$ such that $\Prr{\Gm_k(\kappa, c) = c} - \frac{1}{2} \not\leq \negl(\kappa)$ and $\Prr{\Gm_{k+1}(\kappa, c) = c} - \frac{1}{2} \leq \negl(\kappa)$. $\Gm_k$ and $\Gm_{k+1}$ differ only in one replacement. Without loss of generality, assume one $\Sign[P_s][P_v](m)$ was replaced by $\Simulate[P_s][P_v](m)$
    
    Now define an adversary $\MB$ for $\Gm^\NT$ as follows: $\MB(1,\params, \pk_S, \sk_S, \pk_V, \sk_V)$ picks a $c \in \{0,1\}$ and runs $\Gm_k(c, \kappa)$, replacing $\pk_s$ with $\pk_S$, $\sk_s$ with $\sk_S$, $\pk_v$ with $\pk_V$, and $\sk_v$ with $\sk_V$. This replacement is only a relabeling. The execution of $\Gm_k$ is stopped at the one difference with $\Gm_{k+1}$, then outputs $(m, (\state, c))$, where $\state$ is the current state of $\Gm_k$ and $m$ the message in the replaced $\Sign$. $\MB(2, (\state, c), \sigma)$ then continues the execution of $\Gm_k$ with $\sigma$ as the result of the replaced $\Sign$ until $\Gm_k$ outputs $c'$. $\MB$ then outputs $0$ if $c = c'$ and $1$ otherwise.
    
    Note that in $\Gm^\NT_{\Pi, \MB}(0, \kappa)$, i.e. the case where a $\Sign$ is used in the non-trans\-fer\-a\-bi\-li\-ty game, $\MB$ plays $\Gm_k(c, \kappa)$ and in $\Gm^\NT_{\Pi, \MB}(1, \kappa)$, $\MB$ plays $\Gm_{k+1}(c, \kappa)$. Thus we have that
    \[ \Prr[b]{\Gm^\NT_{\Pi, \MB}(b, \kappa) = b} = \frac{1}{2}\Prr[c]{\Gm_k(c, \kappa) = c} + \frac{1}{2}\Prr[c]{\Gm_{k+1}(c, \kappa) \neq c}.\]
    This directly implies that
    \[ \Adv^\NT_{\Pi, \MB}(\kappa, n) = \frac{1}{2}\left(\Prr[c]{\Gm_k(c, \kappa) = c} - \Prr[c]{\Gm_{k+1}(c, \kappa) = c}\right) \not\leq \negl(\kappa).\]
    
    This contradicts our assumption that $\Pi$ is computationally non-transferable, thus $\Pi$ must be $n$-party sender private.
\end{proof}

\section{Conclusion}

In this paper, we provided a way of defining sender privacy in the $n$-party setting that is novel for \DVS schemes, a generalization of existing definitions and in line with definitions for other types of schemes in the multi-party setting, in particular ID-based \SDVS schemes. We explored the effects of choosing the challenge differently and observed that this induces only polynomial differences in the advantage the adversary has. Furthermore, we showed how other properties of a \SDVS scheme can be used to boost the sender privacy of a scheme from an alternative definition to our definition. In particular, we have proven that under the assumption of strong unforgeability and computational non-transferability a 2-party sender-private scheme is $n$-party sender private. The proven relations are important since the \SDVS schemes are often meant to be employed in an n-party setting and we give sufficient conditions for this to be secure.

We would like to stress that the objective of this paper is to formulate sender privacy in such a way that it covers all theoretical types of attacks that should be intuitively covered by this property. Thus, the definition presented is not necessarily technically different from previous definitions, in fact, it will coincide in many cases. As such we do not provide separating examples of schemes that satisfy one definition but not another, as any such case would be extremely artificial. Instead, the definitions in this work and their equivalence should be used to simplify proofs where sender privacy property is used, both in classical and quantum use cases.

\section{Acknowledgements}

JvW is supported by the Luxembourg National Research Fund (FNR), under the joint CORE project Q-CoDe (CORE17/IS/11689058/Q-CoDe/Ryan).

\addcontentsline{toc}{section}{References}
\emergencystretch=1em
\printbibliography
\emergencystretch=0pt

\end{document}